\documentclass[pdflatex,sn-mathphys-num]{sn-jnl}

\usepackage{graphicx}%
\usepackage{multirow}%
\usepackage{amsmath,amssymb,amsfonts}%
\usepackage{amsthm}%
\usepackage{mathrsfs}%
\usepackage[title]{appendix}%
\usepackage{xcolor}%
\usepackage{textcomp}%

\usepackage{booktabs}%
\usepackage{algorithm}%
\usepackage{algorithmicx}%
\usepackage{algpseudocode}%
\usepackage{listings}%

\theoremstyle{thmstyleone}%
\newtheorem{theorem}{Theorem}[section]
\newtheorem{proposition}[theorem]{Proposition}%

\theoremstyle{thmstyletwo}%
\newtheorem{remark}{Remark}%

\theoremstyle{thmstylethree}%

\begin{document}

\title[Calculation of solitons derived from quivers]
{Calculation of solitons derived from quivers}


\author{\fnm{Takashi} \sur{Ichikawa}}\email{ichikawn@cc.saga-u.ac.jp}




\affil{\orgdiv{Department of Mathematics}, 
\orgname{Faculty of Science and Engineering}, 
\orgaddress{\street{Saga 840-8502}, \country{Japan}}}





\abstract{
We associate each quiver with soliton solutions of nonlinear integrable systems 
containing the KP and Toda hierarchies. 
We give an explicit and computable formula of these soliton solutions which are regarded 
as universal ones obtained by degenerating quasi-periodic solutions. 
}

\keywords{Soliton solutions $\cdot$ KP hierarchies $\cdot$ Quivers $\cdot$ Period integrals}


\pacs[Mathematics Subject Classification]
{35C08 $\cdot$ 37K10 $\cdot$ 14H70 $\cdot$ 32G20}

\maketitle

\section{Introduction}

In this paper, 
we associate each quiver, i.e., oriented graph, 
with soliton solutions of nonlinear integrable systems 
containing the Kadomtsev-Petviashvili (KP) and Toda lattice hierarchies. 
Furthermore, we give an explicit and computable formula of these soliton solutions 
which are regarded  as universal ones obtained as degenerations (or tropical limits) of 
quasi-periodic solutions given by Krichever \cite{Kr}, Mumford \cite{Mum1, Mum2} and others. 

For the KP case, 
our quiver soliton derived from an oriented graph $\Delta$ is given in terms of 
the associated $\tau$-function 
$$
\tau(\mbox{\boldmath $t$}) = \sum_{v \in D} 
\exp \left( \frac{1}{2} v B_{0} v^{T} + \left( c + \sum_{n=1}^{\infty} r_{n} t_{n} \right) v^{T} \right) 
\eqno(1.1) 
$$
of $\mbox{\boldmath $t$} = (t_{1}, t_{2},...)$. 
Here $c$ is any vector, 
$D$ is a Delaunay (finite) set in $H_{1}(\Delta, \mathbb{Z})$, 
and $B_{0}, r_{n}$ are expressed by  parameters attached to oriented edges in $\Delta$ 
as the periods, differentials for degenerate algebraic curves with dual graph $\Delta$ respectively. 
Although $\tau(\mbox{\boldmath $t$})$ is defined in \cite[Theorem 5.3]{I23} 
as the regularized limit of shifted Riemann theta functions for families of Riemann surfaces 
defined by a length function $l$ on $\Delta$, 
we show that $\tau(\mbox{\boldmath $t$})$ is independent of the choice of $l$, 
and can be explicitly calculated from only $\Delta$. 
We also give exact formulas of $B_{0}, r_{n}$ by modifying results of \cite[Section 5]{I23}. 
Furthermore, if $\Delta$ is planar, 
then $\tau(\mbox{\boldmath $t$}) > 0$ for real variables $t_{1}, t_{2},...$, 
and hence it gives rise to (real and) regular solutions of the KP (II) equation 
$$
3 \partial_{y}^{2} u = 
\partial_{x} \left( 4 \partial_{t} u - \partial_{x}^{3} u - 6 u \, \partial_{x} u \right), 
$$
where $\mbox{\boldmath $t$} = (t_{1} = x, t_{2} = y, t_{3} = t)$. 

Another large class of (regular) KP solitons called {\it line solitons} derived from 
(positive) Grassmannians was especially studied by Kodama 
(cf. \cite{Ko17} and references there in).  
See results of Abenda-Grinevich \cite{AG18, AG19, AG22}, 
Agostini-Fevola-Mandelshtamb-Sturmfels \cite{AgFMS}, Ichikawa-Kodama \cite{IK25}, 
Kodama \cite{Ko24} and Nakayashiki \cite{Nak1, Nak2} 
for relations on between line solitons and degenerations of quasi-periodic solutions. 

The universal $\tau$-functions for Toda solitons with main term similar to (1.1) 
were obtained in \cite[Section 6]{I24}, 
and hence one can describe these functions as quiver solitons. 
We hope that similar formulas will be obtained for universal $\tau$-functions for 
the universal hierarchy introduced by Krichever-Zabrodin \cite{KrZ}.

\section{Solitons associated with tropical curves}

\subsection{Tropical curves and their periods}

We consider a (finite and connected) graph $\Delta = (V, E)$ 
consisting of the sets $V, E$ of vertices and edges respectively. 
Furthermore, we fix an orientation of $\Delta$ which gives two maps 
$s: E \rightarrow V$ and $t: E \rightarrow V$ taking each oriented edge, or arrow, 
to its source and target vertex respectively, 
and hence $(\Delta; s, t)$ becomes a quiver as an oriented graph. 
For each oriented edge $e \in E$, 
let $-e$ be the edge $e$ with inverse orientation, 
and put 
$$
\pm E = \{ \pm e \mid e \in E \}. 
$$ 
Following \cite{MZ}, 
a tropical curve $C = (V, E, l)$ (of weight $0$) is defined as a graph $\Delta = (V, E)$ 
with length function $l : E \rightarrow \mathbb{R}_{> 0}$, 
and its genus $g_{C}$ is defined as ${\rm rank}_{\mathbb{Z}} H_{1}(\Delta, \mathbb{Z})$. 
Denote by $\langle \cdot, \cdot \rangle$ the $\mathbb{Z}$-linear form 
on $\left( \bigoplus_{e \in E} \mathbb{Z} e \right)^{\otimes 2}$ satisfying 
$$
\langle e, e' \rangle = \left\{ \begin{array}{ll} 
l(e)  & (e = e'), \\ 
0     & (e \neq e') 
\end{array} \right. 
$$ 
for $e, e' \in E$. 

We associate a tropical curve $C = (V, E, l)$ with a family of {\it real} curves, 
namely, algebraic curves defined over $\mathbb{R}$, which has genus $g_{C}$. 
Let $R_{0}$ be a semistable real curve with dual graph $\Delta$ which is obtained from 
$P_{v} = \mathbb{RP}^{1}$ $(v \in V)$ by identifying real points 
$$
x_{e} \in P_{s(e)} \setminus \{ \infty \} \ \ \text{and} \ \ 
x_{-e} \in P_{t(e)} \setminus \{ \infty \} \quad (e \in E), 
$$
where $x_{h} \neq x_{h'}$ for $h, h' \in \pm E$ satisfying $h \neq h'$ and $s(h) = s(h')$. 
Then it is shown in \cite[Section 2]{I23} that there exists a family of real curves 
$\{ R_{s} \}_{s \geq 0}$ as deformations of $R_{0}$ by real parameters $y_{e} = y_{-e}$ 
$(e \in E)$ with $|y_{e}| = s^{l(e)}$ by the relation 
$$
\xi_{e} \cdot \xi_{-e} = y_{e}; \quad \xi_{h} := z - x_{h} \ (h \in \pm E) 
\eqno(2.1)
$$
for sufficiently small $s \geq 0$. 
Take a symplectic basis $\{ a_{i}, b_{i} \}_{1 \leq i \leq g}$ of $H_{1}(R_{s}, \mathbb{Z})$ for $s > 0$ such that 
$a_{i}$ tend singular points on $R_{0}$ and $b_{i}$ are taken as oriented loops in $\Delta$ 
which give a basis of  $H_{1}(\Delta, \mathbb{Z})$. 
Then the period matrix $B_{C}$ of $C$ is defined as $(\langle b_{i}, b_{j} \rangle)_{1 \leq i, j \leq g}$ 
(cf. \cite{MZ}). 

It is shown in \cite[Theorems 3.4 (1) and 5.1--5.3]{I23} that 
there exists a unique family $\{ \omega_{1},..., \omega_{g} \}$ of basis consisting of 
regular stable differentials, 
namely continuous global sections of the dualizing sheaf (cf. \cite[Section 1]{DeM}) 
on $\{ R_{s} \}_{s \geq 0}$ which are normalized for $\{ a_{i} \}_{i}$ in the sense that 
$$
\mbox{$\displaystyle \frac{1}{2 \pi \sqrt{-1}} \oint_{a_{j}} \omega_{i}$ 
is the Kronecker delta $\delta_{ij}$ for $s > 0$.}
$$ 
Furthermore, 
the period matrices 
$\displaystyle B_{s} = \left( \oint_{b_{i}} \omega_{j} \right)_{1 \leq i,j \leq g}$ 
of $(R_{s}, \{ a_{i}, b_{i} \})$ satisfy the variational formula 
$$
B_{s} = \log s \cdot B_{C} + B_{0}
$$
for a certain $g \times g$ symmetric matrix $B_{0}$.

\subsection{Solitons in the KP system}

Let the notation be as above. 
For $\alpha \in \mathbb{R}^{g}$, 
denote by 
$$
\theta_{C}(\alpha) = 
\max \left\{ \left. \alpha B_{C} v^{T} - \frac{1}{2} v B_{C} v^{T} \, \right| 
v \in \mathbb{Z}^{g} \right\} 
$$
its tropical theta function, 
and by 
$$
D_{C, \alpha} = \left\{ v \in \mathbb{Z}^{g} \left| \,\alpha B_{C} v^{T} 
- \frac{1}{2} v B_{C} v^{T} = \theta_{C}(\alpha) \right. \right\} 
\eqno(2.2) 
$$
which is the Delaunay (finite) set consisting of vertices of the Delaunay polytope for $\alpha$ 
given by the metric $|v|_{B_{C}} = v B_{C} v^{T}$ $(v \in \mathbb{R}^{g})$ 
(cf. \cite[Section 18]{Nam}, \cite[Section 2]{AgFMS}). 

For a smooth point $p$ on $R_{0} = \bigcup_{v \in V} P_{v}$; 
$P_{v} = \mathbb{RP}^{1}$ with coordinate $z$, 
put $r_{n} = (r_{i,n})_{1 \leq i \leq g}$, where 
$$
\omega_{i}|_{R_{0}} = \left\{ \begin{array}{ll} 
{\displaystyle \sum_{n=1}^{\infty} r_{i,n} (z - p)^{n-1} dz} \ \ 
\mbox{at $z = p$}  & (p \neq \infty), 
\\ 
{\displaystyle - \sum_{n=1}^{\infty} r_{i,n} z^{-n-1} dz} \ \ 
\mbox{at $z = \infty$} & (p = \infty). 
\end{array} \right.
$$
Then the following result was obtained in \cite[Theorem 5.3]{I23} 
by showing that $\tau_{C, \alpha}(\mbox{\boldmath $t$})$ becomes 
the regularized limit of quasi-periodic KP solutions given by Krichever \cite{Kr} 
associated with the family $\{ R_{s} \}_{s > 0}$ of Riemann surfaces.

\begin{theorem} {\rm (cf. \cite[Theorem 5.3]{I23})} 

{\rm (1)}
The formal power series in $\mbox{\boldmath $t$} = (t_{1}, t_{2},...)$ defined as 
$$
\tau_{C, \alpha}(\mbox{\boldmath $t$}) = 
\sum_{v \in D_{C, \alpha}} \exp \left( \frac{1}{2} v B_{0} v^{T} + 
\left( c + \sum_{n=1}^{\infty} r_{n} t_{n} \right) v^{T} \right) 
\eqno(2.3)
$$
gives a solution of the KP hierarchy \cite{S, DKJM, Z}: 
$$
\oint e^{-2 \sum_{n=1}^{\infty} s_{n} \lambda^{n}} 
\tau(\mbox{\boldmath $t$} - \mbox{\boldmath $s$} - [\lambda^{-1}]) 
\tau(\mbox{\boldmath $t$} + \mbox{\boldmath $s$} + [\lambda^{-1}]) 
\frac{d \lambda}{2 \pi \sqrt{-1}} = 0, 
$$
where $\mbox{\boldmath $s$} = (s_{1}, s_{2}, s_{3},...)$, 
$[\lambda^{-1}] = \left( \lambda^{-1}, \lambda^{-2}/2, \lambda^{-3}/3,... \right)$ and 
$\displaystyle \oint \cdot \frac{d \lambda}{2 \pi \sqrt{-1}}$ 
means taking the residue at $\lambda = \infty$. 

{\rm (2)} 
If $\mbox{\boldmath $t$} = (t_{1}, t_{2}, t_{3})$ with $x = t_{1}$, $y = t_{2}$, $t = t_{3}$, 
then $u(x, y, t) = 2 \partial_{x}^{2} \log \tau(\mbox{\boldmath $t$})$ gives a solution of 
the KP (II) equation: 
$$
3 \partial_{y}^{2} u = 
\partial_{x} \left( 4 \partial_{t} u - \partial_{x}^{3} u - 6 u \, \partial_{x} u \right). 
\eqno(2.4)
$$

\end{theorem}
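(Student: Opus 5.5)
The plan is to realize $\tau_{C,\alpha}$ as a regularized limit, as $s \to +0$, of Krichever's quasi-periodic $\tau$-functions attached to the family $\{R_{s}\}_{s>0}$. Part (1) then follows because the KP hierarchy is preserved under limits of this kind. For each $s>0$ we take the Riemann surface $R_{s}$ with the normalized differentials $\omega_{i}$, the period matrix $B_{s} = \log s \cdot B_{C} + B_{0}$, and a point $p$ with a local coordinate induced by $z$. Krichever's construction (equivalently the Fay trisecant identity) shows that
$$
\tau_{s}(\mbox{\boldmath $t$}) = \theta\Big( \sum_{n} r_{n}(s) t_{n} + z_{s} \,\Big|\, B_{s} \Big) \exp(\mbox{quadratic form in } \mbox{\boldmath $t$})
$$
satisfies the bilinear identity in the statement. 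Here $r_{n}(s)$ are the expansion coefficients of $\omega_{i}$ at $p$ on $R_{s}$, and $\theta(z|B) = \sum_{v \in \mathbb{Z}^{g}} \exp(\frac{1}{2} v B v^{T} + z v^{T})$. The quadratic term is built from the normalized second-kind differential and depends continuously on $s$, including $s = 0$, by the results of \cite{I23}.

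Next we choose the shift $z_{s} = -\log s \cdot \alpha B_{C} + c$. Each term of the theta series then becomes
$$
\exp\Big( \log s \big( \tfrac{1}{2} v B_{C} v^{T} - \alpha B_{C} v^{T} \big) + \tfrac{1}{2} v B_{0} v^{T} + (c + \textstyle\sum_{n} r_{n}(s) t_{n}) v^{T} \Big).
$$
Multiply by $s^{\theta_{C}(\alpha)}$, which is a $\mbox{\boldmath $t$}$-independent constant and so preserves the bilinear identity. The exponent of $s$ in the $v$-th term is then $\theta_{C}(\alpha) - (\alpha B_{C} v^{T} - \frac{1}{2} v B_{C} v^{T}) \geq 0$, with equality exactly when $v \in D_{C,\alpha}$. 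Since $s \to +0$, the terms with $v \notin D_{C,\alpha}$ tend to $0$. The surviving terms converge to the summands of (2.3), because $r_{n}(s) \to r_{n}$. This continuity of the stable differentials $\omega_{i}$ on the whole family, including $s=0$, is supplied by \cite[Theorems 3.4, 5.1--5.3]{I23}.

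The main obstacle is uniformity. We must show that the limit commutes with the residue in $\lambda$ and with the coefficient extraction in $\mbox{\boldmath $s$}$ and $\mbox{\boldmath $t$}$ in the bilinear identity. We work coefficientwise in $\mbox{\boldmath $t$}$ (formal power series). Each Taylor coefficient of the rescaled series is a sum over $v$ of polynomials in $v$ times $s^{q(v)}$, where $q(v) \geq 0$ grows quadratically off $D_{C,\alpha}$ because $B_{C}$ is positive definite. Hence the tails are dominated uniformly for $0<s<s_{0}$, and every coefficient converges. The bilinear identity, viewed as a family of polynomial relations among finitely many coefficients for each order, therefore passes to the limit. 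The residue at $\lambda=\infty$ involves only finitely many Taylor coefficients at each order in $\mbox{\boldmath $s$}$, so it also commutes with the limit. The exponential quadratic prefactor in $\mbox{\boldmath $t$}$ has a finite limit. It can be removed at the end, since multiplying $\tau$ by $\exp(\sum a_{n} t_{n} + \mbox{quadratic})$ of the admissible form, with the standard Krichever normalization, preserves the KP hierarchy up to the usual gauge. If necessary we absorb this into the choice of the second-kind differential normalized at $s=0$.

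For (2), we use the standard fact that any $\tau$ satisfying the bilinear identity gives $u = 2\partial_{x}^{2}\log\tau$ solving (2.4). Concretely, expand the identity to lowest nontrivial order in $\mbox{\boldmath $s$}$ to get the Hirota equation $(D_{1}^{4} + 3D_{2}^{2} - 4D_{1}D_{3})\tau\cdot\tau = 0$. Then substitute $\tau = e^{F}$ with $u = 2F_{xx}$ and differentiate twice in $x$, as in \cite{DKJM}. The finite sum (2.3) is an entire function of $\mbox{\boldmath $t$}$, so wherever $\tau \neq 0$ this yields a genuine solution.
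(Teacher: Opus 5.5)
Your approach is the one the paper uses. It cites \cite[Theorem 5.3]{I23}, which obtains $\tau_{C,\alpha}$ as a regularized limit of Krichever's quasi-periodic solutions on $\{R_s\}_{s>0}$. Your shift $z_s=-\log s\cdot\alpha B_C+c$, the rescaling by $s^{\theta_C(\alpha)}$, and the coefficientwise control of the tail are all sound; for the tail, positive definiteness of $B_C$ lets $\log s\cdot B_C$ eventually dominate $B_0$.

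The gap is how you dispose of Krichever's factor $\exp\bigl(\tfrac12\sum_{m,n}Q_{mn}(s)t_mt_n\bigr)$. You claim that multiplying a $\tau$-function by $\exp(\text{linear}+\text{quadratic})$ preserves the KP hierarchy up to gauge, and this is false. A linear factor $e^{\sum a_nt_n}$ does drop out of the bilinear identity, because its values at $\mbox{\boldmath $t$}\pm(\mbox{\boldmath $s$}+[\lambda^{-1}])$ multiply to $e^{2\sum a_nt_n}$. A quadratic form $Q$, however, leaves behind $\exp Q(\mbox{\boldmath $s$}+[\lambda^{-1}])$, which depends on $\lambda$ and $\mbox{\boldmath $s$}$ and cannot be pulled out of the residue. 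For example, $\tau=e^{t_1^2/2}$ gives $(D_1^4+3D_2^2-4D_1D_3)\tau\cdot\tau=12\tau^2\neq0$.

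So your limit only shows that $e^{Q_0(\mbox{\boldmath $t$})}\tau_{C,\alpha}$ solves the hierarchy, where $Q_0=\lim_{s\to0}Q(s)$. That says nothing about $\tau_{C,\alpha}$ itself unless $Q_0$ is at most linear. Part (2), which you derive from (1), inherits the problem: a nonzero $(Q_0)_{11}$ shifts $u$ by the constant $2(Q_0)_{11}$, and such a shift does not preserve (2.4) in general. Your remark about absorbing the factor into the choice of second-kind differential does not help, because the $a$-normalization fixes $d\Omega_n$ uniquely.

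The missing step is to prove $Q_0=0$, and this is where the total degeneracy of $R_0$ is used.

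\begin{itemize}
\item Every vanishing cycle $\gamma_e$ meets none of the $a_j$, so it is an integral combination of the $a_j$.
\item Hence the $a$-normalized second-kind differential $d\Omega_n(s)$, with pole only at $p$ and principal part $d(k^n)$ in the coordinate $k$ induced by $z$, has zero period on every $\gamma_e$.
\item Its limit is therefore a stable differential on $R_0$ with zero residue at every node.
\item On $P_{v_0}\cong\mathbb{P}^1$, the limit minus $d(k^n)$ has no poles at all, so it vanishes. On $P_v$ with $v\neq v_0$, the limit itself is holomorphic, so it vanishes.
\end{itemize}

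So the limit of $d\Omega_n$ is exactly $d(k^n)$ on $P_{v_0}$ and $0$ on the other components. Consequently every $Q_{mn}(s)\to0$, and the prefactor tends to $1$ up to a harmless linear exponential. Your coefficientwise argument then gives (1) and (2) for $\tau_{C,\alpha}$ itself.

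You should replace the ``gauge'' remark with this computation. Cite \cite{I23} for the continuity of the meromorphic stable differentials $d\Omega_n$ along the family, exactly as you already do for the $\omega_i$.
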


\subsection{Regularity of solitons}

A Riemann surface $R$ of genus $g$ is called an {\it {\texttt M}-curve} 
if $R$ is a smooth projective curve  defined over $\mathbb{R}$ 
and its real locus consisting of $\mathbb{R}$-rational points on $R$ 
has $g + 1$ connected components. 
The following theorem is a detailed explanation of statements given in \cite[Section 5.2]{I23}. 

\begin{theorem} 
Assume that the graph $\Delta$ underlying a tropical curve $C$ is planar. 
Then we can construct a family $\{ R_{s} \}_{s > 0}$ of Schottky uniformized {\texttt M}-curves 
for which the above $B_{0}$ and $r_{n}$ are real, namely, all their entries are real numbers. 
Therefore, the associated $\tau$-function $\tau_{C,\alpha}(\mbox{\boldmath $t$})$ given in Theorem 2.1 
is positive if $t_{i} \in \mathbb{R}$, 
and hence $u(x, y, t)$ gives a regular solution of (2.4) in real variables $x, y, t$. 
\end{theorem}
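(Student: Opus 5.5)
We will construct the family $\{ R_{s} \}_{s > 0}$ by Schottky uniformization with all data chosen real and compatible with a planar embedding of $\Delta$, and then read off reality and positivity.

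\emph{Choice of real gluing data.} Fix a planar embedding of $\Delta$. At each vertex $v$, the embedding determines a cyclic order of the half-edges $h \in \pm E$ with $s(h) = v$. Choose the real points $x_{h} \in \mathbb{R} \subset P_{v} = \mathbb{RP}^{1}$ so that they appear on the circle $\mathbb{RP}^{1}$ in this cyclic order. Next choose the signs of $y_{e}$ so that each gluing $\xi_{e}\xi_{-e} = y_{e}$ in (2.1) is orientation-compatible with the planar structure. The small real arcs around $x_{e}$ and $x_{-e}$ are then joined so that the real locus becomes the boundary of a thickening of the planar ribbon graph. Concretely, each real circle $\mathbb{RP}^{1}$ is cut at the points $x_{h}$, and the arcs are reconnected along the faces of the embedding.

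\emph{Counting components.} For the planar embedding, the number of faces is $|E| - |V| + 2 = g + 1$ by Euler's formula. Each face contributes exactly one connected component of the real locus of $R_{s}$ for small $s>0$. Hence $R_{s}$ has $g+1$ real ovals and is an {\texttt M}-curve. Equivalently, the construction is a real Schottky group generated by hyperbolic elements with real fixed points $x_{e}, x_{-e}$ and real multipliers $y_{e}$, with isometric circles arranged disjointly along $\mathbb{R}$ in planar order. This is the standard picture of a Schottky-uniformized {\texttt M}-curve.

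\emph{Choice of cycles.} Choose the $a_{i}$ as the $g$ ovals bounding the bounded faces; these shrink to nodes as $s \to 0$. Choose the $b_{i}$ as loops in $\Delta$ dual to them. Let $\sigma$ denote complex conjugation. Then $\sigma(a_{i}) = a_{i}$, since the ovals are pointwise fixed, and $\sigma(b_{i}) = -b_{i}$ modulo $a$-cycles. Since $\omega_{i}$ is characterized by its $a$-normalization in \cite{I23}, the uniqueness statement gives $\overline{\sigma^{*}\omega_{i}} = -\omega_{i}$. This follows from $\oint_{a_{j}}\omega_{i} = 2\pi\sqrt{-1}\,\delta_{ij}$ being purely imaginary.

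\emph{Reality of $r_{n}$ and $B_{0}$.}
\begin{itemize}
\item Restricted to the real locus, $\omega_{i}$ is $\sqrt{-1}$ times a real differential. To make $r_{n}$ real, I would either normalize $2\pi\sqrt{-1}$ away or work with $\sqrt{-1}\,\omega_{i}$. The cleanest route is to use the explicit Schottky series (Poincaré series) for $\omega_{i}$ on $P_{v}$. These series have real coefficients, up to the common factor fixed by the normalization, when all $x_{h}$ and $y_{e}$ are real. Since $p$ is a smooth real point, the expansion coefficients $r_{i,n}$ are then real.
\item For the periods, $\oint_{b_{i}}\omega_{j}$ with $\sigma b_{i} = -b_{i}$ gives $\overline{B_{s}} = B_{s}$ modulo $2\pi\sqrt{-1}\mathbb{Z}$ corrections.
\item Alternatively, $B_{s}$ is a sum of $\log|y_{e}|$ terms and logarithms of real cross-ratios of the $x_{h}$. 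The planar cyclic order makes these cross-ratios positive, so their logarithms are real. Subtracting $\log s \cdot B_{C}$ then leaves $B_{0}$ real.
\end{itemize}
\textbf{This is the main obstacle:} I must verify, via the Schottky product formula for $\exp B_{s}$, that every cross-ratio that appears is positive exactly because of the planar ordering. I would check this first on a single cycle and on the theta graph, then argue by the face structure in general.

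\emph{Positivity and regularity.} Once $B_{0}$ and $r_{n}$ are real and $c$ is taken real, every summand of (2.3) is the exponential of a real number, hence positive, for real $t_{i}$. The set $D_{C,\alpha}$ is nonempty and finite, so $\tau_{C,\alpha}>0$. Therefore $\log\tau$ is real-analytic, and $u = 2\partial_{x}^{2}\log\tau$ is a regular real solution of (2.4) by Theorem 2.1(2).
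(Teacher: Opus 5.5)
\paragraph{Verdict.} Your construction of the \texttt{M}-curves is sound and more explicit than the paper's. You take real $x_{h}$ in the cyclic order of a planar embedding and untwisted bands (the paper takes $y_{e} = -s^{l(e)}$). Euler's formula then gives $g+1$ ovals, one per face. There is, however, a genuine gap in the reality step.

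\paragraph{The gap.} You take the $a_{i}$ to be the ovals bounding the bounded faces and assert that they shrink to nodes; they do not. A face-boundary oval runs along real arcs of several components $P_{v}$ and through the necks. As $s \to 0$ it tends to the face cycle of $R_{0}$, so it is a lift of a loop in $\Delta$, i.e.\ a $b$-cycle. The vanishing cycles are the circles $|\xi_{e}| = \sqrt{|y_{e}|}$ around the real nodes. The map $z \mapsto \bar{z}$ sends each of them to itself with reversed orientation, so $\sigma_{*} a_{i} = -a_{i}$.

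Consequently your conclusion $\overline{\sigma^{*}\omega_{i}} = -\omega_{i}$ is false. It contradicts the limit $\sum_{s(h)=v} a(i,h)\, dz/(z - x_{h})$ on $C_{v}$, which has integer residues at real points. Your proposed remedy, normalizing $2\pi\sqrt{-1}$ away or passing to $\sqrt{-1}\,\omega_{i}$, changes $r_{n}$ and $B_{0}$ themselves, so it would not prove the statement as given.

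Likewise, reality ``modulo $2\pi\sqrt{-1}\mathbb{Z}$'' does not give positivity. Shifting $(B_{0})_{ii}$ by $2\pi\sqrt{-1}k$ multiplies $\exp(\frac{1}{2} v B_{0} v^{T})$ by $(-1)^{k v_{i}^{2}}$. Finally, the cross-ratio sign check that you yourself call the main obstacle is left open.

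\paragraph{The fix.} The correct argument, which the paper compresses into its appeal to the reality of the $\omega_{i}$, runs the other way. Since $\sigma_{*} a_{j} = -a_{j}$, the form $\overline{\sigma^{*}\omega_{i}}$ again has $a_{j}$-periods $2\pi\sqrt{-1}\,\delta_{ij}$. Uniqueness then gives $\overline{\sigma^{*}\omega_{i}} = \omega_{i}$, so the $\omega_{i}$ are real forms. The $r_{i,n}$ are expansion coefficients at a real point in a real coordinate, hence real; this is also visible directly from Theorem 3.3 (1).

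For $B_{0}$, take the $b_{i}$ to be the boundaries of the bounded faces, with dual $a_{i}$ in the span of the vanishing cycles. These $b_{i}$ are ovals, pointwise fixed by $\sigma$, so $\oint_{b_{i}} \omega_{j} = \oint_{b_{i}} \sigma^{*}\omega_{j}$ equals its own complex conjugate exactly. Thus $B_{s}$ is real, and so is $B_{0} = B_{s} - \log s \cdot B_{C}$. Proposition 3.1, with $X \in {\rm GL}_{g}(\mathbb{Z})$, carries this to any other basis.

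With this, no case-by-case check of cross-ratios is needed, and your final positivity and regularity paragraph goes through unchanged.
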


\begin{proof} 
Since the above $R_{0}$ is a real curve on the fixed plane obtained from 
$P_{v} = \mathbb{RP}^{1} = \mathbb{R} \cup \{ \infty \}$ $(v \in V)$ by identifying 
$x_{e} \in P_{s(e)}$ and $x_{-e} \in P_{t(e)}$ $(e \in E)$, 
and its deformations $R_{s}$ by the relation $\xi_{e} \cdot \xi_{-e} = -s^{l(e)}$ for 
small $s > 0$ are {\texttt M}-curves. 
As is shown in \cite[Section 3]{I23},  
$\omega_{i}$ $(1 \leq i \leq g)$ are defined as real-valued forms on $R_{s}$ 
which implies that $B_{0}$ and $r_{n}$ are real. 
\end{proof}

\section{Calculation of quiver solitons}

\subsection{Calculation of Delaunay sets}

\begin{proposition} 
Let $\{ b_{i} \}_{i}$ be the basis of $H_{1}(\Delta, \mathbb{Z})$ given in Section 2.1, 
and take another basis $\{ b'_{i} \}_{i}$ represented by 
$(b'_{1},..., b'_{g}) = (b_{1},..., b_{g}) X$ for an element $X$ of ${\rm GL}_{g}(\mathbb{Z})$. 
Then the $\tau$-function defined in Theorem 2.1 (1) corresponding to $\{ b'_{i} \}_{i}$ becomes 
$\tau_{C, \alpha X^{T}}(\mbox{\boldmath $t$})$ given in (2.3). 
\end{proposition}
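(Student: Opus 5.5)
We track how each ingredient of (2.3) transforms under the change of basis $(b'_1,\dots,b'_g)=(b_1,\dots,b_g)X$, and then re-index the sum by $v=wX^{T}$. The key point is that the $a$-cycles must change dually, so that the new pair is again symplectic and the normalized differentials change accordingly. Since $b'_j=\sum_i X_{ij}b_i$, the dual cycles are $(a'_1,\dots,a'_g)=(a_1,\dots,a_g)(X^{T})^{-1}$, which preserves the intersection pairing. The $a'$-normalized differentials are then $(\omega'_1,\dots,\omega'_g)=(\omega_1,\dots,\omega_g)X$. To check this, compute
$$
\frac{1}{2\pi\sqrt{-1}}\oint_{a'_k}\omega'_j=\sum_{l,i}((X^{T})^{-1})_{lk}X_{ij}\delta_{li}=(X^{-1}X)_{kj}=\delta_{kj}.
$$
By the uniqueness in \cite[Theorems 3.4, 5.1--5.3]{I23}, this is the normalized basis for $\{a'_i\}$.

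From this we read off the new data:
\begin{itemize}
\item The new periods are $B'_s=\bigl(\oint_{b'_i}\omega'_j\bigr)=X^{T}B_sX$.
\item The tropical period matrix is $B'_C=(\langle b'_i,b'_j\rangle)=X^{T}B_CX$ by bilinearity of $\langle\cdot,\cdot\rangle$.
\item Comparing with $B_s=\log s\cdot B_C+B_0$ gives $B'_0=X^{T}B_0X$.
\item The expansion coefficients of $\omega'_i|_{R_0}$ at $p$ give $r'_n=r_nX$ as row vectors.
\item The constant $c$ is arbitrary; we take $c'=cX$, which is still an arbitrary vector.
\end{itemize}

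Next we transform the Delaunay set. For $v\in\mathbb{Z}^g$ we have
$$
\alpha B'_Cv^{T}-\tfrac12 vB'_Cv^{T}=(\alpha X^{T})B_C(vX^{T})^{T}-\tfrac12(vX^{T})B_C(vX^{T})^{T}.
$$
Since $v\mapsto vX^{T}$ is a bijection of $\mathbb{Z}^g$ (because $X\in{\rm GL}_g(\mathbb{Z})$), it follows that $\theta_{C'}(\alpha)=\theta_C(\alpha X^{T})$. It also gives $D_{C',\alpha}=\{v : vX^{T}\in D_{C,\alpha X^{T}}\}$.

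Finally, we substitute $w=vX^{T}$ in (2.3) for the new basis. The quadratic term becomes $\tfrac12 vX^{T}B_0Xv^{T}=\tfrac12 wB_0w^{T}$. The linear term becomes $(cX+\sum r_nXt_n)v^{T}=(c+\sum r_nt_n)w^{T}$. The sum therefore runs over $w\in D_{C,\alpha X^{T}}$, and this is exactly $\tau_{C,\alpha X^{T}}(\mbox{\boldmath $t$})$.

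The main obstacle is justifying the first step: the symplectic condition forces the $a$-cycles to transform by $(X^{T})^{-1}$, and hence the normalized $\omega$ transform by $X$. Once that is settled, everything else is bilinear bookkeeping together with the bijectivity of $X^{T}$ on $\mathbb{Z}^g$.
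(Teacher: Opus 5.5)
Your proof is correct and follows the same route as the paper: you take $a'=a(X^{T})^{-1}$ so that $\omega'=\omega X$, deduce $B'_0=X^{T}B_0X$ and $r'_n=r_nX$, and re-index the sum, and your explicit treatment of the Delaunay set ($D_{C',\alpha}=\{v : vX^{T}\in D_{C,\alpha X^{T}}\}$) and of $c'=cX$ supplies bookkeeping the paper leaves implicit. One wording point: the symplectic condition alone fixes $a'$ only up to adding $b'S$ with $S$ symmetric, so what actually forces $a'=a(X^{T})^{-1}$ is the Section 2.1 requirement that the $a$-cycles be vanishing cycles, which your explicit choice satisfies.
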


\begin{proof} 
Take elements $a'_{i}$ $(1 \leq i \leq g)$ of $H_{1}(R_{s}, \mathbb{Z})$ for $s > 0$ such that 
$\{ a'_{i}, b'_{i} \}_{i}$ gives its symplectic basis, 
and denote by $\{ \omega'_{i} \}_{i}$ the basis of regular stable differentials on $R_{0}$ normalized 
for $\{ a_{i} \}_{i}$ which is represented as 
$(\omega'_{1},..., \omega'_{g}) = (\omega_{1},..., \omega_{g}) X$. 
Then 
$$
\left( \oint_{b'_{i}} \omega'_{j} \right)_{i,j} = X^{T} \left( \oint_{b_{i}} \omega_{j} \right)_{i,j} X, 
$$
and hence the $\tau$-function defined in Theorem 2.1 (1) corresponding to $\{ b'_{i} \}_{i}$ becomes 
$$
\sum_{v \in D_{C, \alpha}} \exp \left( \frac{1}{2} v X^{T} B_{0} X v^{T} + 
\left( \sum_{n=1}^{\infty} r_{n} t_{n} \right) X v^{T} \right) 
= \tau_{C, \alpha X^{T}}(\mbox{\boldmath $t$}). 
$$
\end{proof}

By this proposition, 
it is enough to calculate the $\tau$-function (2.3) for a special basis of $H_{1}(\Delta, \mathbb{Z})$. 
Let $C = (\Delta, l)$ be a tropical curve of genus $g$, 
and denote by $\varphi_{\Delta}$ the linear map 
$$
\varphi_{\Delta}: \mathbb{Z}^{g} = H_{1}(\Delta, \mathbb{Z}) \rightarrow 
\mathbb{Z}^{E} = \bigoplus_{e \in E} \mathbb{Z} \cdot e 
\eqno(3.1) 
$$ 
sending $b = h(1) \cdots h(k) \in H_{1}(\Delta, \mathbb{Z})$ to $\sum_{i=1}^{k} h(i)$, 
where $h(i) \in \pm E$. 
Then it is shown by Mumford \cite[Theorem 18.2]{Nam} that the Delaunay decomposition of 
$\mathbb{R}^{g}$ is independent of the length function $l$, 
and that each Delaunay set is expressed as 
$$
\left\{ u \in \mathbb{Z}^{g} \mid \varphi_{\Delta}(u) \in I_{e} \ (e \in E) \right\} 
$$ 
for certain sets $I_{e}$ of a form 
$\{ n_{e} \}$ or $\{ n_{e}, n_{e} + 1 \}$ with $n_{e} \in \mathbb{Z}$. 
Using this result of Mumford, 
we calculate Delaunay sets for special basis of $H_{1}(\Delta, \mathbb{Z})$. 
Take an orientation of $\Delta$ and a maximal subtree $T$ of $\Delta$. 
Then $\Delta \setminus T$ consists of $g$ oriented edges which we denote by $e_{1},..., e_{g}$, 
and hence there exist uniquely homology basis $\{ b_{T,1},..., b_{T,g} \}$ of $\Delta$ 
such that each $b_{T,i}$ consists of $e_{i}$ and the unique path in $T$ from $t(e_{i})$ to $s(e_{i})$. 
We identify 
$$
\mathbb{Z}^{g} = \bigoplus_{i=1}^{g} \mathbb{Z} \cdot b_{T,i}, 
\quad \mathbb{R}^{g} = \bigoplus_{i=1}^{g} \mathbb{R} \cdot b_{T,i}. 
$$

\begin{proposition}
Let $D_{C, \alpha}$ denote the Delaunay set defined in (2.2) for the tropical curve 
$C = (\Delta, l)$ and $\alpha \in \mathbb{R}^{g}$. 
Then $D_{C, \alpha}$ is a subset of $\bigoplus_{i=1}^{g} D_{i}$, 
where each $D_{i}$ is represented by $\{ m_{i} \}$ or $\{ m_{i}, m_{i}+1 \}$ 
for a certain integer $m_{i}$, 
which consists of elements $d \in \bigoplus_{i=1}^{g} D_{i}$ satisfying that for the map $\varphi_{\Delta}$ 
given in (3.1), 
$\varphi_{\Delta}(d) \in \bigoplus_{e \in E} I_{e}$, 
where each $I_{e}$ is represented by $\{ n_{e} \}$ or $\{ n_{e}, n_{e}+1 \}$ for a certain integer $n_{e}$. 
\end{proposition}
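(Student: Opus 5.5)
The proof will start from the result of Mumford quoted above (\cite[Theorem 18.2]{Nam}). It says that every Delaunay set has the form
$$
D_{C,\alpha} = \left\{ u \in \mathbb{Z}^{g} \mid \varphi_{\Delta}(u) \in I_{e} \ (e \in E) \right\},
$$
where the notation $\varphi_{\Delta}(u) \in I_{e}$ means that the $e$-coefficient of $\varphi_{\Delta}(u)$ lies in $I_{e}$, and each $I_{e}$ is either $\{n_{e}\}$ or $\{n_{e}, n_{e}+1\}$. Granting this description, the only work is to show that $D_{C,\alpha}$ already lies in a product box $\bigoplus_{i} D_{i}$ with $D_{i}$ of the stated form. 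The membership condition $\varphi_{\Delta}(d) \in \bigoplus_{e} I_{e}$ then follows from the same description. So we must find the $D_{i}$ and confirm that the set described is exactly $\{ d \in \bigoplus_i D_i \mid \varphi_\Delta(d) \in \bigoplus_e I_e\}$.

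The key step uses the special basis $\{b_{T,i}\}$ attached to the maximal subtree $T$. Write $u = \sum_{i} u_{i} b_{T,i}$ and compute $\varphi_{\Delta}(u)$ edge by edge. The cycle $b_{T,i}$ passes through the edge $e_{i} \notin T$ exactly once, with coefficient $+1$ since it traverses $e_{i}$ in its own orientation. For $j \neq i$, the cycle $b_{T,j}$ does not pass through $e_{i}$ at all, because it consists of $e_{j}$ and edges of $T$. Hence the $e_{i}$-coefficient of $\varphi_{\Delta}(u)$ equals $u_{i}$. I also need to check that a cycle cannot traverse $e_{i}$ twice with cancelling signs; this holds because the tree path is simple.

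Applying Mumford's condition at $e = e_{i}$ now gives $u_{i} \in I_{e_{i}}$ for every $u \in D_{C,\alpha}$. So I set $D_{i} := I_{e_{i}}$ and $m_{i} := n_{e_{i}}$. This shows $D_{C,\alpha} \subset \bigoplus_{i} D_{i}$, with each $D_{i}$ equal to $\{m_{i}\}$ or $\{m_{i}, m_{i}+1\}$. The conditions at the edges $e_{i}$ become redundant once we restrict to the box, so the full set is cut out inside $\bigoplus_{i} D_{i}$ by the remaining conditions $\varphi_{\Delta}(d) \in I_{e}$ for $e \in T$. Including the redundant ones gives exactly the statement with all $e \in E$.

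The main obstacle is making sure Mumford's result genuinely applies to $D_{C,\alpha}$ as defined in (2.2). That requires identifying the quadratic form $vB_{C}v^{T}$ with $\langle \varphi_{\Delta}(v), \varphi_{\Delta}(v)\rangle = \sum_{e} l(e)\,\varphi_{\Delta}(v)_{e}^{2}$, which is the restriction to $H_{1}$ of a diagonal form on $\mathbb{Z}^{E}$. I would check this directly from $B_{C} = (\langle b_{i}, b_{j}\rangle)$ and bilinearity. The identification explains why the Delaunay cells are cut out by the unit cubes $\prod_{e} [n_{e}, n_{e}+1]$ in $\mathbb{R}^{E}$, which is the content of Mumford's theorem in \cite{Nam}. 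Once this identification is in place, the rest is the coordinate computation above.
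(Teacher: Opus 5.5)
Your proof is correct, and it takes a more direct route than the paper's.

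\textbf{How the paper argues.} It does not read off coordinates. It deforms the length function to $l_s$, with $l_s(e_i)=1$ on the non-tree edges and $l_s(e)=s\,l(e)$ on $T$, so that $B_s=E_g+sB$. It then uses the $l$-independence of the Delaunay decomposition (the other half of Mumford's theorem) to say that $D_{C,\alpha}$ is a Delaunay set for every $B_s$. Passing to the limit $s\downarrow 0$, where Delaunay sets for the standard form $E_g$ are boxes $\bigoplus_i D_i$, it gets the box. The description by the $I_e$ is then quoted from Mumford's cube result.

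\textbf{Comparison.} The paper's route is geometrically suggestive: the box appears as the Delaunay structure of the curve with $T$ contracted. However, the step from ``Delaunay for all $B_s$'' to ``contained in a Delaunay set of $E_g$'' tacitly needs a semicontinuity fact, namely that Delaunay decompositions of nearby forms refine that of the limit form. The paper does not spell this out.

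Your computation rests on the same fact that gives $B_s=E_g+sB$: the $e_i$-coefficient of $\varphi_\Delta(u)$ is exactly $u_i$. It avoids the limit and the $l$-independence entirely and uses only the cube description. It also identifies the box explicitly as $D_i=I_{e_i}$, $m_i=n_{e_i}$.

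One small point: your worry about $e_i$ being traversed twice is vacuous. Since $e_i\notin T$, the tree path in $b_{T,i}$ never contains $e_i$.
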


\begin{proof}
For each positive number $s$, 
we define the length function $l_{s}$ on the set $E$ of edges in $\Delta$ by 
$$
l_{s}(e) = \left\{ \begin{array}{ll} 
1 & (e \in \{ e_{1},..., e_{g} \}), \\ s \cdot l(e) & (e \not\in \{ e_{1},..., e_{g} \}). 
\end{array} \right. 
$$
Then the associated period matrix $B_{s}$ is decomposed as $E_{g} + s B$, 
where $E_{g}$ denotes the unit matrix of degree $g$, 
and $B$ is a symmetric and nonnegative definite matrix. 
By the result of Mumford, 
$D_{C, \alpha}$ becomes a Delaunay set for the metric associated with $B_{s}$ 
for any $s > 0$. 
Since any Delaunay set for the standard metric associated with 
$E_{g} = \lim_{s \downarrow 0} B_{s}$ is given by $\bigoplus_{i=1}^{g} D_{i}$, 
where $D_{i} = \{ m_{i} \}$ or $\{ m_{i}, m_{i}+1 \}$ for integers $m_{i}$, 
$D_{C, \alpha}$ is contained in  $\bigoplus_{i=1}^{g} D_{i}$ for certain $D_{i}$'s. 
Therefore, the assertion follows from Mumford's result. 
\end{proof}

\subsection{Calculation of tropical differentials and periods}

For each $i = 1,..., g$, let $b_{i}$ and $\omega_{i}$ be as in Section 2.1 such that 
$b_{i}$ is conjugate to the product $h_{i}(1) \cdots h_{i}(k_{i})$ of elements of $\pm E$ 
which is reduced in the sense that $h_{i}(k_{i}) \neq -h_{i}(1)$. 
Denote by $C_{v}$ the irreducible component of $R_{0}$ corresponding to $v \in V$. 
then by \cite[Theorem 3.3 (1)]{I23}, 
the pullback $(\omega_{i}|_{C_{v}})^{*}$ of the restriction $\omega_{i}|_{C_{v}}$ to $P_{v}$ is represented as 
$$
\left( \omega_{i}|_{C_{v}} \right)^{*} = 
\sum_{s(h_{i}(m)) = v} \frac{dz}{z - x_{h_{i}(m)}} - \sum_{t(h_{i}(n)) = v} \frac{dz}{z - x_{-h_{i}(n)}}. 
$$ 
Therefore, the integer 
$$
a(i, h) := \left| \{ m \in \{ 1,..., k_{i} \} \mid h_{i}(m) = h \} \right| - 
\left| \{ n \in \{ 1,..., k_{i} \} \mid -h_{i}(n) = h \} \right| 
\eqno(3.2)
$$
satisfies 
$$
\left( \omega_{i}|_{C_{v}} \right)^{*} = \sum_{s(h) = v} a(i, h) \frac{dz}{z - x_{h}}. 
$$
Then we have the following theorem from \cite[Theorems 5.1 and 5.3]{I23} 
by correcting the formula of $B_{0}$ given \cite[p. 1719]{I23}.

\begin{theorem} 
Let $a(i, h)$ $(1 \leq i \leq g, h \in \pm E)$ be the integers defined in (3.2).  

{\rm (1)} 
Fix a vertex $v_{0}$ of $V$ and a coordinate $s = z^{-1}$ at $\infty \in P_{v_{0}} = \mathbb{CP}^{1}$. 
Then the above $r_{n}$ is given by 
$$
r_{n} = \left( \sum_{s(h) = v_{0}} -a(i, h) x_{h}^{n} \right)_{1 \leq i \leq g}. 
$$

{\rm (2)} 
For $i, j = 1,..., g$, denote by $(B_{0})_{ij}$ the $(i, j)$-entry of the matrix $B_{0}$. 
If $b_{i}$ (resp. $b_{j}$) are conjugate to the reduced products $h_{i}(1) \cdots h_{i}(k_{i})$ 
(resp. $h_{j}(1) \cdots h_{j}(l_{j})$) of elements of $\pm E$, 
then 
\begin{align*}
\lefteqn{
\exp \left( (B_{0})_{ij} \right) \cdot \prod_{m=1}^{l_{j}} {\rm sgn} (y_{h_{j}}(m)) 
} 
\\
& = 
\prod_{m=1}^{l_{j}} \prod_{s(h) = s(h_{j}(m))} \left[ \prod_{h \neq h_{j}(m)} (x_{h_{j}(m)} - x_{h})^{a(i, h)} 
\prod_{h \neq -h_{j}(m-1)} (x_{-h_{j}(m-1)} - x_{h})^{-a(i, h)} \right],
\end{align*}
where $y_{h}$ are the real parameters given in (2.1), 
and $h_{j}(0) = h_{j}(l_{j})$. 
In particular, 
$$
\exp \left( (B_{0})_{ii} \right) = 
\prod_{m=1}^{l_{i}} \left[ \frac{-{\rm sgn} (y_{h_{i}}(m))}{(x_{h_{j}(m)} - x_{-h_{j}(m-1)})^{2}} \right]. 
$$ 

{\rm (3)} 
If $\Delta$ is planar, 
then all $(B_{0})_{ij}$ are real numbers, 
and hence 
\begin{align*}
\lefteqn{
\exp \left( (B_{0})_{ij} \right) 
} 
\\
& = 
\prod_{m=1}^{l_{j}} \prod_{s(h) = s(h_{j}(m))} \left[ \prod_{h \neq h_{j}(m)} 
|x_{h_{j}(m)} - x_{h}|^{a(i, h)} 
\prod_{h \neq -h_{j}(m-1)} |x_{-h_{j}(m-1)} - x_{h}|^{-a(i, h)} \right] 
\end{align*}
and 
$$
(B_{0})_{ii} = -2 \sum_{m=1}^{l_{i}} \log |x_{h_{j}(m)} - x_{-h_{j}(m-1)}|. 
$$  
\end{theorem}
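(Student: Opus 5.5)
The plan is to compute everything directly from the explicit description of $\omega_i$ on $R_0$ and on the deformed curves $R_s$ given by the plumbing relation (2.1), using \cite[Theorems 3.3, 5.1 and 5.3]{I23} as the source of the local representations.

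For part (1), restrict $\omega_i$ to the component $C_{v_0}$. By the formula above,
$(\omega_i|_{C_{v_0}})^* = \sum_{s(h)=v_0} a(i,h)\, dz/(z-x_h)$. Substitute $z = s^{-1}$ and expand each term at $\infty$:
$$\frac{dz}{z-x_h} = -\frac{ds}{s(1-x_h s)} = -\sum_{n\ge 0} x_h^{n} s^{n-1}\, ds,$$
equivalently $dz/(z-x_h)=\sum_{n\ge0}x_h^n z^{-n-1}dz$. The $n=0$ coefficient is $\sum_h a(i,h)$, which vanishes because the sum of residues on $P_{v_0}$ is zero (equivalently, each passage of $b_i$ through $v_0$ contributes one $+1$ and one $-1$). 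Comparing with the defining expansion $-\sum_n r_{i,n} z^{-n-1}dz$ then gives $r_{i,n} = -\sum_{s(h)=v_0} a(i,h)x_h^n$.

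For part (2), compute $\oint_{b_j}\omega_i$ on $R_s$. The cycle $b_j$ passes successively through the components $P_{s(h_j(m))}$: it enters at a point near $x_{-h_j(m-1)}$ and leaves near $x_{h_j(m)}$, then crosses the collar $\xi_{h}\xi_{-h}=y_{h}$. On each component, integrate the logarithmic form between the two points $x_{-h_j(m-1)}+\epsilon'$ and $x_{h_j(m)}+\epsilon$, where $|\epsilon|,|\epsilon'|$ are comparable to $|y|^{1/2}$. The result is $\sum_h a(i,h)[\log(x_{h_j(m)}-x_h)-\log(x_{-h_j(m-1)}-x_h)]$ for $h$ different from the endpoints. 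The singular terms $h=h_j(m)$ and $h=-h_j(m-1)$ produce $\log\epsilon$ contributions. Matching these across the collar via $\xi_h\xi_{-h}=y_h$ combines them into $\log y_{h_j(m)}$ terms, which constitute exactly the $\log s\cdot B_C$ part, plus $\log\operatorname{sgn}(y)$ factors. Exponentiating and removing $\prod|y|^{\langle\cdot\rangle}$ yields the stated product, with the sign product on the left.

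This bookkeeping of endpoint terms is the main obstacle. One has to check that the prescribed choice of branches and the correction to \cite[p.~1719]{I23} lead to precisely the sign factor $\prod\operatorname{sgn}(y_{h_j(m)})$ and to the exponent $\pm a(i,h)$. For $i=j$, the endpoint terms have coefficient $a(i,h_i(m))=1$ and $a(i,-h_i(m-1))=-1$; they give $(x_{h_i(m)}-x_{-h_i(m-1)})^{-2}$ together with the sign $-1$ from orientation reversal. This gives the diagonal formula, which I would verify separately as a consistency check against the general formula once the cancellations are carried out.

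For part (3), Theorem 2.2 shows that for planar $\Delta$ one can choose real $x_h$ and $y_h$ so that $B_0$ is real. The product in (2) is then a real number whose absolute value is the product of absolute values. Since $\exp((B_0)_{ij})>0$, taking absolute values of both sides gives the stated formulas, and taking logarithms gives the formula for $(B_0)_{ii}$.
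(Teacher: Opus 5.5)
Your plan follows the paper's proof. Part (1) expands $dz/(z-x_h)$ in $s=z^{-1}$; your remark that the $n=0$ coefficient $\sum_{s(h)=v_0}a(i,h)$ vanishes, because each visit of $b_i$ to $v_0$ contributes one $+1$ and one $-1$, is correct and only implicit in the paper. Part (2) splits $\oint_{b_j}\omega_i$ into logarithmic integrals on the components, takes limits of the regular ones, and pairs the two singular logarithms at each node through $\xi_h\xi_{-h}=y_h$. Part (3) takes absolute values, using the reality of $B_0$ from Theorem 2.2.

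\textbf{The gap.} The step you defer and then assert, namely that this bookkeeping yields exactly the sign factor $\prod_m \mathrm{sgn}(y_{h_j(m)})$, fails when $i\neq j$. Consider the node crossed by $h=h_j(m)$:
\begin{itemize}
\item the exit term on $P_{s(h)}$ contributes $a(i,h)\log\xi_h$;
\item the entry term on $P_{t(h)}$ contributes $-a(i,-h)\log\xi_{-h}=a(i,h)\log\xi_{-h}$, since $a(i,-h)=-a(i,h)$.
\end{itemize}
So the node contributes $a(i,h_j(m))\log y_{h_j(m)}$, not $\log y_{h_j(m)}$. Branch choices play no role, since they only change $\oint_{b_j}\omega_i$ by elements of $2\pi\sqrt{-1}\,\mathbb{Z}$.

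The exponent is forced. The moduli must assemble to $\prod_m |y_{h_j(m)}|^{a(i,h_j(m))}=s^{\langle b_i,b_j\rangle}$ to be absorbed into $\log s\cdot B_C$, and the signs come attached to them. So your computation proves (2) with $\prod_m\mathrm{sgn}(y_{h_j(m)})^{a(i,h_j(m))}$ on the left. The same holds for the paper's computation, whose singular combination $\int_a^{z_h}\frac{dz}{z-x_h}-\int_{z_{-h}}^{b}\frac{dz}{z-x_{-h}}$ enters with coefficient $a(i,h)$.

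This agrees with the printed factor when every $a(i,h_j(m))$ is odd, e.g.\ for $i=j$ with $b_i$ a simple loop, so your diagonal formula is fine. It does not agree in general. For the banana quiver with $g=2$, $i=1$, $j=2$:
\begin{itemize}
\item $a(1,e_2)=0$ and $a(1,-e_0)=1$;
\item so the computation gives $\exp((B_0)_{12})=\mathrm{sgn}(y_{e_0})\,F$, where $F$ is the expression inside the absolute value in Section 3.4;
\item the printed formula instead gives $\exp((B_0)_{12})=\mathrm{sgn}(y_{e_2})\mathrm{sgn}(y_{e_0})\,F$.
\end{itemize}
In the {\texttt M}-curve setting of Theorem 2.2 ($y_e<0$, with the $x_{e_k}$ and the $x_{-e_k}$ in opposite cyclic orders) one checks that $F<0$. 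Then only the first version is positive, as reality of $B_0$ requires.

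Part (3) is unaffected, because taking absolute values removes all signs. But you should state and prove (2) with the exponent $a(i,h_j(m))$ on the signs, rather than plan to verify the printed sign.
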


\begin{proof} 
The assertion (1) follows from \cite[Theorem 5.3]{I23} since 
$$
\frac{dz}{z - x_{h}} = - s^{-1} (1 + x_{h} s + (x_{h} s)^{2} + \cdots ) ds. 
$$
We show the assertion (2). 
The family $\{ R_{s} \}_{s > 0}$ of Riemann surfaces is obtained from $\bigcup_{v \in V} P_{v}$ 
by identifying $z_{e} \in P_{s(e)}$ and $z_{-e} \in P_{t(e)}$ by the relation 
$z_{e} \cdot z_{-e} = y_{e}$ with $|y_{e}| = s^{l(e)}$, 
where $\lim_{s \rightarrow 0} z_{h} = x_{h}$ $(h \in \pm E)$. 
Then $\displaystyle \oint_{b_{i}} \omega_{j}$ is asymptotically for $s \rightarrow 0$ 
a sum with integral coefficients of the following integrals   
$$
\int_{a}^{z_{h}} \frac{dz}{z - x_{k}} \ \ 
\left( s(h) = s(k), \, h \neq k, \, a \in P_{s(h)} \right), 
$$
$$
\int_{z_{-h}}^{b} \frac{dz}{z - x_{k}} \ \ 
\left( s(-h) = s(k), \, -h \neq k, \, b \in P_{t(h)} \right) 
$$
and 
$$
\int_{a}^{z_{h}} \frac{dz}{z - x_{h}} - \int_{z_{-h}}^{b} \frac{dz}{z - x_{-h}} \ \ 
\left( a \in P_{s(h)}, \, b \in P_{t(h)} \right). 
$$
Since 
$$
\lim_{s \rightarrow 0} \exp \left( \int_{a}^{z_{h}} \frac{dz}{z - x_{k}} \right) = 
\frac{x_{h} - x_{k}}{a - x_{k}}, 
\ \ 
\lim_{s \rightarrow 0} \exp \left( \int_{z_{-h}}^{b} \frac{dz}{z - x_{k}} \right) = 
\frac{b - x_{k}}{x_{-h} - x_{k}} 
$$
and 
$$
\exp \left( \int_{a}^{z_{h}} \frac{dz}{z - x_{h}} - \int_{z_{-h}}^{b} \frac{dz}{z - x_{-h}} \right) = 
\frac{y_{h}}{(a - x_{h})(b - x_{-h})}, 
$$
the assertion (2) follows from the definition of $B, B_{0}$. 
Furthermore, (3) follows from (2) and Theorem 2.2. 
\end{proof}

\subsection{Calculation of quiver solitons} 
By applying Propositions 3.1, 3.2 and Theorem 3.3 to the calculation of 
$\tau_{C, \alpha}$ given in Theorem 2.1, 
we give the following explicit and computable formula of KP soliton solutions in (2.3) 
in terms of quivers $\Delta$. 

\begin{theorem}
The $\tau$-function for the KP soliton solution in (2.3) is described 
in terms of a quiver $(\Delta; s, t)$ of genus $g$ with maximal subtree $T$ as 
$$
\tau(\mbox{\boldmath $t$}) = 
\sum_{v \in D} \exp \left( \frac{1}{2} v B_{0} v^{T} + 
\left( c + \sum_{n=1}^{\infty} r_{n} t_{n} \right) v^{T} \right). 
\eqno(3.3)
$$
Here $D$ is a finite subset of $\mathbb{Z}^{g}$ described in Proposition 3.2 
for the basis $\{ b_{T,1},..., b_{T,g} \}$ associated with $T$, 
and $B_{0}, r_{n}$ are given in Theorem 3.3. 
In particular, $u(x, y, t) = 2 \partial_{x}^{2} \log \tau(\mbox{\boldmath $t$})$ 
with $\mbox{\boldmath $t$} = (x, y, t)$ gives a solution of the KP (II) equation (2.4). 
\end{theorem}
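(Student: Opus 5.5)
The plan is to assemble Theorem 3.4 from the pieces already established, in three steps: choose the basis, identify the finite set $D$, and substitute the explicit data. Theorem 2.1 already says that $\tau_{C,\alpha}(\mbox{\boldmath $t$})$ in (2.3) solves the KP hierarchy and that $u = 2\partial_x^2 \log \tau$ solves (2.4), for any symplectic basis with $b_i$ forming a basis of $H_1(\Delta,\mathbb{Z})$. So nothing new has to be proved about integrability. What must be shown is that, for the particular basis $\{b_{T,1},\dots,b_{T,g}\}$ attached to a maximal subtree $T$, each ingredient of (2.3) has exactly the form stated in (3.3).

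First, fix the orientation of $\Delta$ and the subtree $T$, and let $e_1,\dots,e_g$ be the edges outside $T$. Proposition 3.1 shows that passing from the basis $\{b_i\}$ of Section 2.1 to $\{b_{T,i}\}$ by some $X \in {\rm GL}_g(\mathbb{Z})$ changes $\tau_{C,\alpha}$ only into $\tau_{C,\alpha X^T}$. This is again a function of the form (2.3), now with period data computed in the new basis. Since $\alpha$ is arbitrary, the family of $\tau$-functions is unchanged, and we may therefore work entirely in the basis $b_{T,i}$.

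Second, set $D = D_{C,\alpha}$ for this basis. Proposition 3.2 describes $D$ inside $\bigoplus_i D_i$ by the conditions $\varphi_\Delta(d) \in \bigoplus_e I_e$. By Mumford's theorem, $D$ depends only on $\Delta$ and not on the length function $l$. Hence the index set in (3.3) is determined combinatorially by the quiver, which is the meaning of "described in Proposition 3.2".

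Third, compute $B_0$ and $r_n$. Each $b_{T,i}$ is the reduced cycle consisting of $e_i$ followed by the tree path from $t(e_i)$ to $s(e_i)$. From this cycle we read off the integers $a(i,h)$ in (3.2). Theorem 3.3(1) then gives $r_n$ for a chosen base vertex $v_0$, and Theorem 3.3(2) gives $B_0$ in terms of the $x_h$ and the signs of the $y_h$. Substituting these and $D$ into (2.3) yields exactly (3.3). The KP statement then follows from Theorem 2.1(2).

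The one point needing care is the independence claim: that $B_0$ and $r_n$, and hence $\tau$, do not depend on $l$. This is visible from Theorem 3.3, whose formulas involve only the $x_h$, ${\rm sgn}(y_h)$ and $a(i,h)$, never $l$. I expect the main obstacle to be checking that $b_{T,i}$ really is a reduced product in the sense required by Theorem 3.3, namely $h_i(k_i) \neq -h_i(1)$. This holds because $e_i \notin T$ while the rest of the cycle lies in $T$, so the cycle cannot backtrack at the point where it closes up.
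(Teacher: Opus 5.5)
Your proposal is correct and is essentially the paper's own argument. The paper obtains Theorem 3.4 simply by applying Propositions 3.1, 3.2 and Theorem 3.3 to the $\tau$-function of Theorem 2.1, with the KP (II) statement inherited from Theorem 2.1(2). One small precision: for fixed $\alpha$ the set $D_{C,\alpha}$ can change with $l$ (for the $2$-loop banana quiver with $l(e_1)=l(e_2)=1$ and $\alpha=(0.4,0.4)$ it is $\{(0,0)\}$ if $l(e_0)<1/3$ and $\{(1,0),(0,1)\}$ if $l(e_0)>1/3$), so what Mumford's theorem makes independent of $l$ is the collection of Delaunay sets as $\alpha$ ranges over $\mathbb{R}^{g}$, which is all that (3.3) requires.
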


\begin{remark}
Since $B_{0}$ is symmetric, 
in order to calculate $\exp \left( v B_{0} v^{T}/2 \right)$, 
it is enough to know the values of $\exp((B_{0})_{ij})$ for $i \neq j$ and of $\exp((B_{0})_{ii}/2)$. 
When $\Delta$ is not planar, 
one can calculate  $\exp((B_{0})_{ii}/2)$ by the way in the proof of Theorem 3.3 (2). 
\end{remark}

\subsection{Example for the banana quivers}
Let $\Delta$ be the $g$-loop banana quiver which consists of $2$ vertices $v_{0}, v_{1}$ and 
$g+1$ oriented edges $e_{0}, e_{1},..., e_{g}$ such that $s(e_{i}) = v_{0}, t(e_{i}) = v_{1}$ $(0 \leq i \leq g)$. 
Take a maximal subtree $T$ consisting $e_{0}$ with boundary vertices $v_{0}, v_{1}$. 
Then $b_{i} = e_{i} \cdot (-e_{0})$ $(1 \leq i \leq g)$, 
and hence by Proposition 3.2, 
the Delaunay set $D \subset \mathbb{Z}^{g}$ is either $\{ (m_{i})_{1 \leq i \leq g} \}$ or 
$$
\{ (a_{i})_{1 \leq i \leq g} \mid \text{$a_{j} = m_{j}, \, m_{j}+1$ and $a_{i} = m_{i}$ for $i \neq j$} \} 
\ \ (1 \leq j \leq g) 
$$ 
for some integers $m_{i}$. 
Since 
$$
a(i, e_{0}) = -1, \ a(i, e_{j}) = \delta_{ij}, \, a(i, -e_{j}) = -\delta_{ij} \ (1 \leq i, j \leq g), 
$$
by Theorem 3.3, we have 
$$
r_{n} = \left( x_{e_{0}}^{n} - x_{e_{i}}^{n} \right)_{1 \leq i \leq g} 
$$
and for $1 \leq i, j \leq g$ with $i \neq j$, 
\begin{align*}
(B_{0})_{ii} & = 
-2 \log \left| (x_{e_{i}} - x_{e_{0}})(x_{-e_{0}} - x_{-e_{i}}) \right|, 
\\
(B_{0})_{ij} & = 
\log \left| \frac{(x_{e_{j}} - x_{e_{i}})(x_{-e_{j}} - x_{-e_{i}})}
{(x_{e_{0}} - x_{e_{i}})(x_{-e_{0}} - x_{-e_{i}})(x_{e_{j}} - x_{e_{0}})(x_{-e_{j}} - x_{-e_{0}})} \right|. 
\end{align*}
Then by Theorem 3.4, 
(3.3) gives rise to the quiver soliton associated with $(\Delta, T)$ 
for any vector $c = (c_{i})_{1 \leq i \leq g}$.

\section{Concluding Remarks}
In this paper, we give an explicit and computable formula (3.3) of KP soliton solutions 
which are regarded  as universal ones obtained as degenerations of quasi-periodic solutions. 
On these quiver solitons, 
$B_{0}$ and $r_{n}$ are rational functions of parameters attached to oriented edges 
in given quivers which are regarded as the regularized period matrices and 
the coefficients of differentials for the degenerate algebraic curves respectively. 

As for regular quiver solitons for the KP (II) equation, 
it is very interesting to study their structure, 
e.g., the wave patterns and the degenerating behaviors from quasi-periodic solutions. 
The formula (3.3) of the $\tau$-functions $\tau(\mbox{\boldmath $t$})$ for quiver solitons 
has a similar form to line KP solitons studied by Kodama and other researchers 
(see \cite{Ko17} and references there in) 
since they both are linear finite sums of exponentials of linear functions of $t_{i}$. 
Therefore, based on results in \cite{Ko17} it seems that if $\mbox{\boldmath $t$} = (x, y, t)$ 
is a system of real parameters, 
then the peaks of the quiver soliton $2 \partial_{x}^{2} \log \tau(\mbox{\boldmath $t$})$ 
at fixed $t$ are approximately described by the soliton graph 
$$
\{ (x, y) \mid \mbox{there exist $v_{1}, v_{2} \in D$ such that $v_{1} \neq v_{2}$ 
and $\Theta_{v_{1}} = \Theta_{v_{2}} = \max_{v \in D} \Theta_{v}$} \}, 
$$
where 
$$
\Theta_{v}(x, y, t) = (r_{1} v^{T}) x + (r_{2} v^{T}) y + (r_{3} v^{T}) t  \quad (v \in D). 
$$ 

Finally, we hope to extend the formula of quiver solitons for Krichever-Zabrodin's universal hierarchy. 
\vspace{2ex}

\noindent 
{\bf Declaration of competing interest} 
\vspace{1ex}

The author declares that he has no known competing financial interests 
or personal relationships that could have appeared to influence the work 
reported in this article. 
\vspace{2ex}

\noindent 
{\bf Data availability} 
\vspace{1ex}

No data was used for the research described in this article. 
\vspace{2ex}

\noindent 
{\bf Acknowledgments} 
\vspace{1ex}

This work is partially supported by the JSPS Grant-in-Aid for 
Scientific Research No. 25K06920.

\renewcommand{\refname}{\normalsize{\bf References}}

\end{document}